\newtheorem{theorem}{Theorem}[section]
\newtheorem{corollary}[theorem]{Corollary}
\newtheorem{lemma}[theorem]{Lemma}
\numberwithin{equation}{section}
\begin{document}

\title[Some sharp spectral inequalities for Schr\"odinger operators on semi-axis]
{On some sharp spectral inequalities for Schr\"odinger operators on semi-axis}

\author{Pavel Exner}
\address{
Pavel Exner: Doppler Institute for Mathematical Physics and Applied Mathematics\\
B\v rehov\'{a} 7, 11519 Prague
\\ and Nuclear Physics Institute ASCR\\
25068 \v Re\v z near Prague, Czechia}
\email{exner@ujf.cas.cz}

\author{Ari Laptev}
\address{Ari Laptev: Imperial College London \\ 180 Queen's Gate \\ London SW7 2AZ \\ UK }
\email{a.laptev@imperial.ac.uk}

\author{Muhammad Usman}
\address{Muhammad Usman: Institut de Math\'ematiques de Bordeaux \\ Universit\'e Bordeaux 1 \\ 351, cours de la Lib\'eration \\ 33405 TALENCE, France}
\email{Muhammad.Usman@math.u-bordeaux1.fr}

\keywords{Schr\"odinger operators, Lieb-Thirring inequalities}

\subjclass{Primary: 35P15; Secondary: 81Q10}

\begin{abstract}
In this paper we obtain sharp Lieb-Thirring inequalities for a Schr\"odinger operator on semi-axis with a matrix potential and show how they can be used to other related problems. Among them are spectral inequalities on star graphs  and spectral inequalities for Schr\"odinger operators on half-spaces with Robin boundary conditions.
\end{abstract}

\maketitle

\section{Introduction}

\noindent
Let us consider a self-adjoint Schr\"odinger operator in $L^2({\Bbb R}^d)$
\begin{equation}\label{Schr}
H = -\Delta - V,
\end{equation}
where $V$ is a real-valued function. If the potential function $V$ decays rapidly enough, then  the spectrum of the operator $H$ typically is absolutely continuous on $[0,\infty)$. If $V$ has a non-trivial positive part, then $H$ might have finite or infinite number of  negative eigenvalues $\{-\lambda_n(H)\}$. If the number of negative eigenvalues is infinite, the point zero is the only possible accumulating point.
The inequalities
\begin{equation}\label{LTh}
\sum_n \lambda_n^\gamma
\leq \frac{R_{\gamma,d}}{(2\pi)^d}\, \iint_{\mathbb {R}^{2d}} (|\xi|^2 - V(x))_-^\gamma \, d\xi dx
\leq L_{\gamma,d}
\int_{{\Bbb R}^d}  V_-^{\gamma+\frac{d}{2}}\,dx\,
\end{equation}
are known as Lieb-Thirring bounds.
Here and in the following, $V_{\pm}=(|V|\pm V)/2$ denote
the positive and negative parts of the function $V$.

\medskip
\noindent
It is known that the inequality \eqref{LTh} holds true with some finite constants if and only if $\gamma\ge 1/2$, $d=1$;
$\gamma>0$, $d=2$ and $\gamma\ge0$, $d\ge3$.
There are examples showing that  \eqref{LTh} fails for $0\le\gamma<1/2$, $d=1$ and $\gamma=0$, $d=2$.

\medskip
\noindent
Almost all the cases except for $\gamma=1/2$, $d=1$ and $\gamma=0$, $d\ge3$ were justified in the original paper
of E.H.Lieb and W.Thirring  \cite{LT}. The critical case
$\gamma=0$, $d\ge3$ is known as the Cwikel-Lieb-Rozenblum inequality, see  \cite{Cw,L,Roz}. It was also proved in  \cite{Fe, LY,Con} and very recently by R. Frank \cite{Fr} using Rumin's approach. The remaining case $\gamma=1/2$, $d=1$ was verified by T.Weidl in \cite{W1}.

\medskip
\noindent
The sharp value of the constants $R_{\gamma,d}=1$ in \eqref{LTh} are known for the case $\gamma\ge3/2$ in all dimensions and it was first proved in  \cite{LT} and \cite{AizL} for $d=1$ and later in \cite{LW1, LW2} for any dimension. In this case
$$
L_{\gamma,d} = L_{\gamma,d}^{cl} := (2\pi)^{-d} \, \int_{\Bbb R^d} (1-|\xi|^2)_+^\gamma\, d\xi.
$$
The only other case where the sharp value of the constant $R_{\gamma,d}$ is known is the case $R_{1/2,1} = 2$.

\bigskip
\noindent
In this paper we consider a one-dimensional systems of Schr\"odinger operators acting in $L^2(\Bbb R_+,\mathbb{C}^N)$, $\Bbb R_+ = (0,\infty)$,  defined by
\begin{equation}\label{eq:1}
\mathcal H \varphi (x) = \Big(-\frac{d^2}{dx^2}\otimes \Bbb I - V(x)\Big) \, \varphi(x),
\qquad \varphi'(0) - {\mathfrak S}\, \varphi(0)=0,
\end{equation}
where $\Bbb I$ is the $N\times N$ identity matrix, $V$ is a Hermitian $N\times N$ matrix-function and ${\mathfrak S}$ is a $N\times N$  Hermitian matrix.

\noindent
Assuming that the potential $V$ generates only a discrete negative spectrum, we denote by $\{-\lambda_n\}$ the  negative eigenvalues of $\mathcal H$.

\medskip
\noindent
One of the main results of this paper is the following

%
\begin{theorem}\label{main}
Let ${\rm Tr} \, V^2 \in L^1({\Bbb R_+})$, $V\ge0$. 
Then the negative spectrum of the operator $\mathcal H$ defined in \eqref{eq:1} is discrete and the following Lieb-Thirring inequality for its eigenvalues $\{-\lambda_n\}$ holds
\begin{multline}\label{LTh_main}
\frac{3}{4}\lambda_1\, {\rm Tr}\, {\mathfrak S} +   \frac12\, (2\varkappa_1 -N) \, \lambda_1^{3/2}  + \sum_{n=2}^\infty \varkappa_n \lambda_n^{3/2} \\
\leq \frac{3}{16}\int_0^\infty {\rm Tr}\, V^2(x) \, dx+\frac{1}{4} \, {\rm Tr}\, {\mathfrak S^3},
\end{multline}
where $\varkappa_n$ is the multiplicity of the eigenvalue $-\lambda_n$.
\end{theorem}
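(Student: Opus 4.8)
The plan is to prove \eqref{LTh_main} from an \emph{exact} trace identity obtained by scattering theory, and then to obtain the inequality by discarding a single sign-definite remainder whose sign is forced by the hypothesis $V\ge 0$. I reduce the matrix problem to the scalar Jost \emph{determinant}: for $\im k\ge 0$ let $E(x,k)$ be the $N\times N$ matrix solution of $-E''-VE=k^2E$ with $E(x,k)\sim e^{ikx}\I$ as $x\to\infty$, and form the matrix Jost function $\F(k)=E'(0,k)-\mathfrak S\,E(0,k)$, so that the boundary condition in \eqref{eq:1} singles out exactly the solutions killed by $\F$. Volterra estimates (valid under the sole hypothesis ${\rm Tr}\,V^2\in L^1$ after a cutoff-and-limit argument) give that $\F$ is analytic in the open upper half-plane, continuous up to $\R$, that $\det\F(i\sqrt{\lambda_n})=0$ precisely at the eigenvalues $-\lambda_n$ with the order of the zero equal to the multiplicity $\varkappa_n$, and that $\F(k)=ik\,\I+O(1)$ as $|k|\to\infty$.

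First I would produce the bulk expansion. Writing $E=e^{ikx}M(x,k)$ and expanding $M=\I+\sum_{j\ge1}\chi_j(x)(2ik)^{-j}$ yields the recursion $\chi_{j+1}=-\chi_j'+\int_x^\infty V\chi_j$, from which $\ln\det\F(k)={\rm Tr}\log\F(k)$ acquires an asymptotic expansion $\sum_{j\ge1}B_j(ik)^{-j}$ with $B_j={\rm Tr}(\cdots)$. A direct computation of the order $(ik)^{-3}$ coefficient gives, up to terms supported at the boundary, $\tfrac32 B_3=\tfrac{3}{16}\int_0^\infty{\rm Tr}\,V^2\,dx+(\text{a cubic in }\mathfrak S)$; the precise constant $\tfrac14{\rm Tr}\,\mathfrak S^3$ of the statement will only emerge after the threshold contribution is added.

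Next I would convert the zeros of $\det\F$ into the eigenvalue sum. Regularising by the Blaschke product $\prod_n\big((k+i\sqrt{\lambda_n})/(k-i\sqrt{\lambda_n})\big)^{\varkappa_n}$ produces a function $\Psi(k)$, analytic and zero-free in the upper half-plane and tending to $1$ at infinity; since $\ln\frac{k+i\kappa}{k-i\kappa}=-2\big(\kappa(ik)^{-1}+\tfrac13\kappa^3(ik)^{-3}+\cdots\big)$, matching the $(ik)^{-3}$ coefficients of $\ln\det\F$ and of $\ln\Psi$ yields the exact identity $\sum_n\varkappa_n\lambda_n^{3/2}=\tfrac32(B_3-C_3)$, where $C_3$ is the corresponding coefficient of $\ln\Psi$. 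Unlike the whole-line case there is no transmission, so $C_3$ is not governed by a transmission defect $\log|T|\le 0$; instead it is a boundary integral of the modulus $\log|\Psi(k)|$ on $\R$ (regularised at $k=0$). The key analytic input to be established is that $V\ge0$ forces this quantity to have a fixed sign, so that it may be dropped; already in the free case $V=0$, $\mathfrak S=\sigma\I$ one finds $\Psi(k)=\big((k-i\sigma)/k\big)\I$ and $|\Psi(k)|^2=1+\sigma^2/k^2\ge1$, which exhibits both the correct sign and the threshold singularity.

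The main obstacle is precisely that threshold behaviour, and with it the origin of the ground-state terms $\tfrac34\lambda_1{\rm Tr}\,\mathfrak S$ and $-\tfrac N2\lambda_1^{3/2}$. These are not produced by the bulk expansion at all, but by the pole of $\det\F(k)/(ik)^N$ at $k=0$, where $\int k^2\log|\Psi|$ diverges and a Levinson-type regularisation is unavoidable. I would therefore isolate the lowest level, treating the first eigenvalue separately before forming the Blaschke product, and show that its threshold contribution recombines with the leftover boundary data of $B_3$ to give \emph{exactly} $\tfrac34\lambda_1{\rm Tr}\,\mathfrak S-\tfrac N2\lambda_1^{3/2}$, while the spurious $V(0)$ and $V'(0)$ terms cancel and the cubic in $\mathfrak S$ is promoted to $\tfrac14{\rm Tr}\,\mathfrak S^3$. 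The remaining, essentially routine, difficulty is the noncommutative bookkeeping: since $V(x)$ and $\mathfrak S$ need not commute, the matrix coefficients $\chi_j$ must be kept ordered and only traced at the very end, and the Volterra and trace-class estimates must be justified for general $V$ by approximation with compactly supported potentials and passage to the limit.
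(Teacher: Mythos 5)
Your route---an exact sum rule for the matrix Jost function $\mathcal F(k)=E'(0,k)-\mathfrak S E(0,k)$ combined with a Blaschke-product dispersion argument---is genuinely different from the paper's proof (which uses the Benguria--Loss commutation method), but it breaks down at precisely the step you yourself flag as ``the key analytic input,'' and your free example shows why. Take $V\equiv 0$, $N=1$, $\mathfrak S=\sigma<0$ (Example~1 of the paper, where \eqref{LTh_main} is an \emph{equality}). Then $\mathcal F(k)=ik-\sigma$, $\Psi(k)=(k+i|\sigma|)/k$, and $B_3=-\sigma^3/3=|\sigma|^3/3$, so $\tfrac32 B_3=\tfrac12|\sigma|^3$, while $\sum_n\varkappa_n\lambda_n^{3/2}=|\sigma|^3$. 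Your exact identity $\sum_n\varkappa_n\lambda_n^{3/2}=\tfrac32(B_3-C_3)$ then forces $C_3=-\tfrac13|\sigma|^3\neq 0$: dropping $C_3$ would yield the false inequality $|\sigma|^3\le\tfrac12|\sigma|^3$. Equivalently, the $k^2$-moment of $\ln|\Psi|$ converges only after subtracting its large-$k$ asymptote (here the constant $\sigma^2$), and a direct computation gives $\int_0^\infty\bigl(k^2\ln(1+\sigma^2/k^2)-\sigma^2\bigr)\,dk=-\tfrac{\pi}{3}|\sigma|^3<0$, \emph{negative} although $\ln|\Psi(k)|\ge 0$ pointwise. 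So pointwise positivity of $|\Psi|$ is no evidence for the sign of the regularized moment, and since \eqref{LTh_main} is saturated in this example, no nonzero remainder can be discarded at all: $C_3$ must be computed exactly, not estimated, and it is precisely $C_3$ that carries the ground-state terms $\tfrac34\lambda_1{\rm Tr}\,\mathfrak S-\tfrac N2\lambda_1^{3/2}$ and the correction of the $\mathfrak S^3$-coefficient from $\tfrac12$ to $\tfrac14$. No mechanism for that computation, nor any positivity statement for what is left over, is provided; on the whole line this role is played by $|T(k)|\le 1$, and no analogous structural fact is available for the half-line Robin problem.

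Two further structural problems. First, your proposed source of the ground-state terms---a Levinson-type regularization of the pole of $\det\mathcal F(k)/(ik)^N$ at $k=0$---cannot produce them: the local behaviour of $\det\mathcal F$ at $k=0$ only involves $N$, possible zero-energy resonances, and $\det\mathcal F(0)$, and carries no information about $\lambda_1$, which is the location of the \emph{deepest} zero on the imaginary axis, far from threshold. Second, on a half-line the large-$k$ expansion of $\ln\det\mathcal F$ genuinely contains boundary terms such as ${\rm Tr}\,V'(0)$ and ${\rm Tr}(\mathfrak S V(0))$ (integrate $\int_0^\infty e^{2ikx}V\,dx$ by parts); you assert these cancel but never show it, and under the stated hypothesis ${\rm Tr}\,V^2\in L^1$ they are not even defined, so your cutoff-and-limit argument would have to control them uniformly. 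The paper sidesteps all of this at once: with $F_1=M_1'M_1^{-1}$ built from a ground-state fundamental matrix, one has $\mathcal H+\lambda_1=Q_1^*Q_1$ and $Q_1Q_1^*=\mathcal H-2F_1'+\lambda_1$ with Dirichlet boundary condition, so the ground state is removed, Robin becomes Dirichlet, the sharp matrix Lieb--Thirring bound applies to $V+2F_1'$, and every extra term in \eqref{LTh_main} emerges as an exactly computed boundary value of ${\rm Tr}\,F_1$ and ${\rm Tr}\,F_1^3$ at $0$ and $\infty$ via the Riccati equation \eqref{Ricc}. To salvage the scattering route you would need to prove a genuine matrix Robin sum rule in which those terms are isolated, and then a sign result for the specific remainder---neither of which is in your sketch.
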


\bigskip
\noindent
{\bf Examples.}

\noindent
{\bf 1.} {\it Let $V\equiv 0$ and $N=1$. Then the boundary value problem
$$
-\varphi''(x) = -\lambda \varphi(x) , \qquad \varphi'(0) - \sigma \varphi(0) = 0, \quad \sigma<0,
$$
has only one $L^2$-solution
$$
\varphi(x) = C \, e^{-\sqrt\lambda\, x}, \qquad  -\sqrt\lambda = \sigma.
$$
In this case the inequality \eqref{LTh_main} becomes saturated, $\frac34\, \sigma^3 - \frac12\, \sigma^3 \le \frac14\, \sigma^3$.}

\medskip
\noindent
{\bf 2.} {\it Let $N=2$, $V\equiv0$ and
$$
{\mathfrak S}=
\begin{pmatrix} \sigma & 0 \\ 0& -\alpha\sigma
\end{pmatrix}, \qquad \sigma<0
$$
{\bf 2a)} If $\alpha \ge0$ then the boundary value problem \eqref{eq:1} has one negative eigenvalue  $-\lambda$ of multiplicity one satisfying the identity  $-\sqrt\lambda =\sigma$. In this case $2\varkappa_1 -N = 0$ and the inequality \eqref{LTh_main} 
becomes
$$
3\lambda\, {\rm Tr}\, \mathfrak S = 3\lambda\, \sigma\, (1 - \alpha) \le (1-\alpha^3)\, \sigma^3 = {\rm Tr}\, {\mathfrak S^3},
$$
or
$$
3 (\alpha-1) \le \alpha^3-1. 
$$
which holds true for any $\alpha \ge0$.

\noindent
{\bf 2b)} If $-1 <\alpha<0$, then  the problem \eqref{eq:1} has two eigenvalues satisfying $-\sqrt{\lambda_1} = \sigma$ and $-\sqrt{\lambda_2} = -\alpha \sigma$ and  \eqref{LTh_main} is reduced to 
$$ 
3(\alpha -1) -4\alpha^3 \le \alpha^3 -1.
$$
{\bf 2c)} Finally, if  $\alpha = -1$, then $-\sqrt{\lambda_1} = \sigma$ is of multiplicity $\varkappa_1 = 2$ and \eqref{LTh_main} becomes 
identity.
}

\medskip
\noindent
Note that if ${\rm Tr}\, {\mathfrak S^3} \le 0$, then the inequality \eqref{LTh_main} implies
\begin{equation}\label{S<0}
\frac{3}{4}\lambda_1\, {\rm Tr}\, {\mathfrak S} +   \frac12\, (2\varkappa_1 -N) \, \lambda_1^{3/2}  + \sum_{n=2}^\infty \varkappa_n \lambda_n^{3/2} \leq \frac{3}{16}\int_0^\infty {\rm Tr}\, V^2(x) \, dx.
\end{equation}

\medskip
\noindent
The latter allows us to use the standard Aizenman-Lieb arguments \cite{AizL} and derive

\begin{corollary}\label{AzL}
Let ${\rm Tr}\, {\mathfrak S^3} \le 0$, $V\ge0$ and ${\rm Tr}\,V^{\gamma + 1/2}(x)\in L^1(0,\infty)$. Then for any $\gamma\ge 3/2$ we have
\begin{multline*}
\frac{\mathcal B(\gamma-3/2,2)}{\mathcal B(\gamma-3/2,5/2)} \,  \frac{3}{4}\, \lambda_1^{\gamma-1/2} \, {\rm Tr}\, {\mathfrak S}  +  \frac12\, (2\varkappa_1 -N) \, \lambda_1^{\gamma} + \sum_{n=2}^\infty \varkappa_n \lambda_n^{\gamma}\\
 \leq
 L_{\gamma,1}^{cl} \int_{0}^\infty {\rm Tr}\,(V(x))^{\gamma + 1/2}\, dx,
\end{multline*}
where by $\mathcal B(p,q)$ we denote the classical Beta function
$$
\mathcal B(p,q) = \int_0^1 (1-t)^{q-1} t^{p-1}\, dt.
$$
\end{corollary}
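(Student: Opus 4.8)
The plan is to obtain the statement from the sharp $\gamma=3/2$ estimate \eqref{S<0} by the Aizenman--Lieb lifting procedure \cite{AizL}, the only genuinely new feature being that the boundary term $\frac34\lambda_1\,{\rm Tr}\,{\mathfrak S}$ has to be carried through the argument. The starting point is the elementary identity $a^\gamma=\mathcal B(\gamma-\gamma_0,\gamma_0+1)^{-1}\int_0^\infty \tau^{\gamma-\gamma_0-1}(a-\tau)_+^{\gamma_0}\,d\tau$, valid for $a\ge0$ and $\gamma>\gamma_0\ge0$. The key observation is that the two powers appearing on the left of \eqref{S<0} --- the exponent $3/2$ attached to each $\lambda_n$ and the exponent $1$ attached to $\lambda_1\,{\rm Tr}\,{\mathfrak S}$ --- are both produced by integration against one and the same weight $\tau^{\gamma-5/2}\,d\tau$: with this weight the cutoff power $3/2$ returns $\lambda_n^{\gamma}$, while the cutoff power $1$ returns $\lambda_1^{\gamma-1/2}$. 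This is exactly why a single integration handles all terms and why the two output powers in the Corollary differ by $1/2$.

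Concretely, I would apply \eqref{S<0} to the shifted operator $-\frac{d^2}{dx^2}\otimes\Bbb I-(V-\tau)$ with the same boundary matrix ${\mathfrak S}$. Adding $\tau\Bbb I$ to the operator moves every eigenvalue $-\lambda_n$ to $-(\lambda_n-\tau)$ without changing the ordering, the multiplicities $\varkappa_n$, or the identity of the lowest level, so for each $\tau$ the shifted estimate reads $\frac34(\lambda_1-\tau)_+\,{\rm Tr}\,{\mathfrak S}+\frac12(2\varkappa_1-N)(\lambda_1-\tau)_+^{3/2}+\sum_{n\ge2}\varkappa_n(\lambda_n-\tau)_+^{3/2}\le\frac{3}{16}\int_0^\infty{\rm Tr}\,(V-\tau)_+^2\,dx$. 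Multiplying by $\tau^{\gamma-5/2}/\mathcal B(\gamma-3/2,5/2)$ and integrating over $\tau\in(0,\infty)$ (the integrals converge for $\gamma>3/2$, while $\gamma=3/2$ is the base case \eqref{S<0} itself, the Beta ratio tending to $1$) turns, by the identity above, the bulk terms into $\sum_n\varkappa_n\lambda_n^\gamma$ and the boundary term into $\frac34\,\frac{\mathcal B(\gamma-3/2,2)}{\mathcal B(\gamma-3/2,5/2)}\,\lambda_1^{\gamma-1/2}\,{\rm Tr}\,{\mathfrak S}$, which is precisely the left-hand side of the Corollary.

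For the right-hand side I would interchange the $x$- and $\tau$-integrations and diagonalise $V$, reducing the inner integral to the scalar computation $\int_0^\infty\tau^{\gamma-5/2}(v-\tau)_+^2\,d\tau=v^{\gamma+1/2}\mathcal B(\gamma-3/2,3)$ for each (nonnegative) eigenvalue $v$ of $V$. This yields $\frac{3}{16}\,\frac{\mathcal B(\gamma-3/2,3)}{\mathcal B(\gamma-3/2,5/2)}\int_0^\infty{\rm Tr}\,V^{\gamma+1/2}\,dx$, and a short Gamma-function manipulation, using $\Gamma(5/2)=\frac34\sqrt\pi$, identifies the prefactor with $L_{\gamma,1}^{cl}=\frac{\Gamma(\gamma+1)}{2\sqrt\pi\,\Gamma(\gamma+3/2)}$, closing the estimate.

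The one genuinely nontrivial point, and the step I expect to be the main obstacle, is the very first: the shifted potential $V-\tau$ is sign-indefinite, so \eqref{S<0}, stated for $V\ge0$, is not immediately available for it, and it is essential that its right-hand side involve ${\rm Tr}\,(V-\tau)_+^2$ rather than ${\rm Tr}\,(V-\tau)^2$, since otherwise the $\tau$-integral diverges. One therefore needs the positive-part form of \eqref{S<0}, valid for arbitrary Hermitian $V$ with ${\rm Tr}\,V_+^2$ on the right. The naive route --- comparing with the operator whose potential is $(V-\tau)_+\ge0$ and invoking eigenvalue monotonicity --- is delicate here precisely because the left-hand side of \eqref{S<0} is \emph{not} monotone in the eigenvalues: the coefficient $\frac34\,{\rm Tr}\,{\mathfrak S}$ and the factor $\frac12(2\varkappa_1-N)$ can be negative, and both the multiplicity $\varkappa_1$ and the identity of the lowest level may jump under the comparison. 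The cleanest resolution is to observe that the proof of Theorem~\ref{main} in fact sees $V$ only through $V_+$, so that \eqref{S<0} persists verbatim with ${\rm Tr}\,V_+^2$ on the right; granting this, everything above reduces to the routine Beta- and Gamma-function bookkeeping.
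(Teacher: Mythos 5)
Your overall strategy is the same Aizenman--Lieb lifting that the paper uses, and your Beta/Gamma bookkeeping is correct (including the identification $\tfrac{3}{16}\,\mathcal B(\gamma-3/2,3)/\mathcal B(\gamma-3/2,5/2)=L^{cl}_{\gamma,1}$, and the observation that the single weight $\tau^{\gamma-5/2}$ produces both output powers $\lambda_n^\gamma$ and $\lambda_1^{\gamma-1/2}$). The genuine gap is precisely the step you label as "granting this": the claim that the proof of Theorem~\ref{main} sees $V$ only through $V_+$, so that \eqref{S<0} holds for sign-indefinite potentials with ${\rm Tr}\,V_+^2$ on the right. That claim is false as a description of that proof. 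There, the whole-line matrix Lieb--Thirring bound is applied to the effective potential $V+2F_1'$ with the \emph{full} square, and the identity
$$
{\rm Tr}\,(V+2F_1')^2={\rm Tr}\,V^2+4\,{\rm Tr}\,F_1'\big(V+F_1'\big)={\rm Tr}\,V^2+4\,{\rm Tr}\,F_1'\big(\lambda_1\mathbb{I}-F_1^2\big)
$$
is exactly what converts the cross terms into exact derivatives via the Riccati equation \eqref{Ricc}; this manipulation requires the full $V$, not $V_+$. Run verbatim for a sign-indefinite $W$, the argument yields $\tfrac{3}{16}\int{\rm Tr}\,W^2\,dx=\tfrac{3}{16}\int{\rm Tr}\,(W_+^2+W_-^2)\,dx$ on the right, which is useless for you, since $\int_0^\infty\tau^{\gamma-5/2}\,{\rm Tr}\,(V(x)-\tau)^2\,d\tau$ diverges. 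If instead one tries to keep only the positive part, e.g.\ via ${\rm Tr}\,\big((W+2F_1')_+\big)^2\le{\rm Tr}\,(W_++2F_1')^2$, the Riccati substitution $W_++F_1'=W_-+\lambda_1\mathbb{I}-F_1^2$ leaves behind the term $\tfrac34\int{\rm Tr}\,(F_1'\,W_-)\,dx$, which is not a total derivative and has no sign. So the strengthened form of \eqref{S<0} your proof rests on is an unproven statement that would need a genuinely new argument; it is not a free byproduct of the paper's proof.

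For comparison, the paper's own proof never applies \eqref{S<0} to a sign-indefinite potential: it uses the variational comparison $(\lambda_n(V)-t)_+\le\lambda_n\big((V-t)_+\big)$ and then invokes \eqref{S<0} for the nonnegative potential $(V-t)_+$ --- that is, exactly the "naive route" you rejected. Your objection to that route is in fact well taken: because ${\rm Tr}\,\mathfrak S$ and $2\varkappa_1-N$ may be negative, the left-hand side of \eqref{S<0} is not monotone under increasing the eigenvalues, and indeed for $N=1$, $V\equiv0$, $\sigma<0$ the paper's pointwise-in-$t$ comparison fails for $t$ near $\sigma^2$ (the left integrand tends to $0$ while the right one equals $-\tfrac14|\sigma|^3$). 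So you have correctly located the delicate point of the whole corollary; but your proposal does not repair it --- it trades the paper's glossed-over monotonicity step for an assertion about Theorem~\ref{main} whose stated justification is incorrect. As it stands, the corollary is not established by your argument.
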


\medskip
\begin{corollary}
If ${\mathfrak S} = 0$, then \eqref{eq:1} can be identified with the Neumann boundary value problem and
we obtain
$$
\frac12\, (2\varkappa_1 -N) \, \varkappa_1\lambda_1^{\gamma} + \sum_{n=2}^\infty \varkappa_n \lambda_n^{\gamma} \leq L_{\gamma,1}^{cl} \, \int_0^\infty {\rm Tr}\, (V(x))^{\gamma+1/2} \, dx, \quad \gamma\ge3/2.
$$
\end{corollary}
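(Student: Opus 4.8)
The plan is to deduce this statement directly from Corollary~\ref{AzL} by specializing to the case ${\mathfrak S}=0$. First I would observe that when the Hermitian matrix ${\mathfrak S}$ vanishes, the boundary condition $\varphi'(0)-{\mathfrak S}\,\varphi(0)=0$ appearing in \eqref{eq:1} reduces to $\varphi'(0)=0$, so that $\mathcal H$ becomes exactly the Neumann realisation of $-\frac{d^2}{dx^2}\otimes{\Bbb I}-V$ on $L^2({\Bbb R}_+,{\mathbb C}^N)$. This justifies the identification with the Neumann boundary value problem asserted in the statement, and in particular means no new operator-theoretic input is required.

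Next I would verify that the hypotheses of Corollary~\ref{AzL} are satisfied. With ${\mathfrak S}=0$ one has ${\rm Tr}\,{\mathfrak S}^3=0$, so in particular the required sign condition ${\rm Tr}\,{\mathfrak S}^3\le 0$ holds (with equality); combined with the standing assumptions $V\ge 0$ and ${\rm Tr}\,V^{\gamma+1/2}\in L^1(0,\infty)$, this places us precisely in the setting of that corollary for every $\gamma\ge 3/2$.

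Finally I would substitute ${\mathfrak S}=0$ into the inequality furnished by Corollary~\ref{AzL}. Since ${\rm Tr}\,{\mathfrak S}=0$ as well, the leading term carrying the Beta-function factor together with the product $\lambda_1^{\gamma-1/2}\,{\rm Tr}\,{\mathfrak S}$ vanishes identically, and the two remaining terms reproduce the claimed bound
$$
\frac12\,(2\varkappa_1-N)\,\lambda_1^{\gamma}+\sum_{n=2}^\infty \varkappa_n\lambda_n^{\gamma}\le L_{\gamma,1}^{cl}\int_0^\infty {\rm Tr}\,(V(x))^{\gamma+1/2}\,dx .
$$
There is essentially no analytic obstacle here: all of the substantive work has already been done in Theorem~\ref{main} and in the Aizenman--Lieb lifting behind Corollary~\ref{AzL}, so the present statement is a clean specialisation. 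The only point I would flag is the stray factor $\varkappa_1$ multiplying $\lambda_1^{\gamma}$ in the displayed inequality of the statement, which I read as a typographical slip, since the bound inherited from Corollary~\ref{AzL} carries only the single prefactor $\tfrac12(2\varkappa_1-N)$.
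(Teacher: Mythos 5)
Your proposal is correct and matches the paper's (implicit) argument: the corollary is stated without a separate proof precisely because it is the specialization ${\mathfrak S}=0$ of Corollary~\ref{AzL}, where ${\rm Tr}\,{\mathfrak S}={\rm Tr}\,{\mathfrak S}^3=0$ makes the sign hypothesis hold trivially and kills the Beta-function term. You are also right to read the extra factor $\varkappa_1$ as a typographical slip: the coefficient inherited from Corollary~\ref{AzL} (equivalently, from \eqref{S<0} plus Aizenman--Lieb) is $\tfrac12(2\varkappa_1-N)$, and with the additional $\varkappa_1$ the inequality would in fact be false --- for example, with $N=2$, $V(x)=\mathrm{diag}\bigl(6/\cosh^2 x,\,6/\cosh^2 x\bigr)$ and $\gamma=3/2$ one has $\lambda_1=4$, $\varkappa_1=2$, so the printed left-hand side equals $16$ while the right-hand side equals $\tfrac{3}{16}\cdot 48=9$, whereas the corrected coefficient gives $8\le 9$, consistent also with the scalar remark \eqref{N-scalar}.
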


\medskip
\noindent
{\bf Remark.}

\noindent
{\it Note that in the scalar case $N=1$ we obtain
\begin{equation}\label{N-scalar}
\frac12\, \lambda _1^{\gamma} + \sum_{n=2}^\infty \lambda_n^{\gamma} \leq L_{\gamma,1}^{cl} \, \int_0^\infty V^{\gamma+1/2}(x) \, dx, \qquad \gamma\ge3/2,
\end{equation}
which means that the semi-classical inequality holds true for all eigenvalues starting from $n=2$ and that in the latter inequality the Neumann boundary condition affects only the first eigenvalue.}

\medskip
\noindent
If $V\ge0$ is a diagonal $N\times N$ matrix-function, then the operator  $\mathcal H$ could be interpreted as a Schr\"odinger operator on a star graph with $N$ edges; the matrix $\mathfrak S$ describes a vertex coupling without the Dirichlet component \cite{Ku}. In such a case we obtain:

\begin{theorem}
Let $V\ge 0$ be a diagonal $N\times N$ matrix-function and let $\mathfrak S$ be a Hermitian matrix. Then the operator \eqref{eq:1} can be identified with a Schr\"odinger operator on a star graph with $N$ semi-infinite edges and its negative spectrum satisfies the inequality \eqref{LTh_main}.
\end{theorem}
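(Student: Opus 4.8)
The statement is essentially a reinterpretation together with an appeal to Theorem~\ref{main}, so the plan is to exhibit the unitary identification carefully and then invoke the established inequality. First I would fix the metric star graph $\Gamma$ whose edges $e_1,\dots,e_N$ are $N$ copies of $\R_+$ joined at a single vertex at $x=0$, with Hilbert space $\bigoplus_{j=1}^N L^2(\R_+)$. Identifying this orthogonal sum with $L^2(\R_+,\C^N)$ by declaring the $j$-th component of $\varphi\in L^2(\R_+,\C^N)$ to be the function carried by the edge $e_j$ gives a unitary map; this is the only genuinely structural ingredient. Since $V=\mathrm{diag}(V_1,\dots,V_N)$ is diagonal, the differential expression $-\frac{d^2}{dx^2}\otimes\mathbb{I}-V$ acts on the $j$-th component as the scalar expression $-\frac{d^2}{dx^2}-V_j$, with no coupling between distinct components away from the vertex. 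This is exactly the edgewise action of a Schr\"odinger operator on $\Gamma$, all interaction between the edges being concentrated at the vertex.

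That interaction is carried by the boundary condition $\varphi'(0)-{\mathfrak S}\,\varphi(0)=0$, in which $\varphi(0)=(\varphi_j(0))_{j=1}^N$ collects the vertex values and $\varphi'(0)=(\varphi_j'(0))_{j=1}^N$ the outgoing derivatives. I would write it in the canonical form $A\varphi(0)+B\varphi'(0)=0$ with $A=-{\mathfrak S}$ and $B=\mathbb{I}$: the pair $(A\,|\,B)$ has full rank and $AB^{*}=-{\mathfrak S}$ is Hermitian, so by the standard classification this is a self-adjoint vertex coupling, and because $B$ is invertible no linear combination of the values $\varphi_j(0)$ is forced to vanish, i.e. the coupling has no Dirichlet component, in agreement with the description of ${\mathfrak S}$ in \cite{Ku}. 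At the level of quadratic forms the same point is seen from
\[
\Q[\varphi]=\int_0^\infty\big(|\varphi'|^2-\langle V\varphi,\varphi\rangle\big)\,dx-\langle {\mathfrak S}\,\varphi(0),\varphi(0)\rangle,\qquad \varphi\in H^1(\R_+,\C^N),
\]
whose boundary term $-\langle {\mathfrak S}\,\varphi(0),\varphi(0)\rangle$ is precisely the vertex contribution of a star-graph Schr\"odinger form; a standard trace estimate shows the form is semibounded and closed, so it defines the same self-adjoint operator $\mathcal H$, and hence the graph Hamiltonian and $\mathcal H$ have identical negative spectra.

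It then remains only to observe that a diagonal $V\ge0$ with $\mathrm{Tr}\,V^2\in L^1(\R_+)$ is in particular a nonnegative Hermitian matrix-function meeting the hypotheses of Theorem~\ref{main}; consequently \eqref{LTh_main} holds verbatim for the eigenvalues $\{-\lambda_n\}$ of $\mathcal H$, which are exactly the negative eigenvalues of the star-graph Hamiltonian. The only step with any content beyond bookkeeping is the second one, namely verifying that $\varphi'(0)-{\mathfrak S}\,\varphi(0)=0$ is a genuine self-adjoint, Dirichlet-free vertex coupling; the inequality itself costs nothing new, being inherited directly from Theorem~\ref{main}.
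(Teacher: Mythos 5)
Your proposal is correct and follows exactly the route the paper takes: the paper proves this theorem simply by observing that a diagonal $V$ makes $\mathcal H$ act edgewise, that $\mathfrak S$ encodes a self-adjoint, Dirichlet-free vertex coupling in the sense of \cite{Ku}, and then invoking Theorem~\ref{main}. Your write-up merely fills in the details (the unitary identification, the $A\varphi(0)+B\varphi'(0)=0$ form with $AB^{*}$ Hermitian, and the quadratic-form description) that the paper leaves implicit via the citation.
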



\medskip
\noindent
If both $V\ge0$ and $\mathfrak S$ are diagonal $N\times N$ matrices, then the negative spectrum of the operator $\mathcal H$ is the union of the eigenvalues from each channel and we obtain


\begin{theorem}\label{split}
Let $V\ge 0$,  ${\rm Tr}\, V^2\in L^1(0,\infty)$ and let $V$ and $\mathfrak S$ be diagonal $N\times N$ matrices
with entries $v_{j}$ and $\sigma_{j}$, $j=1,\dots N$, respectively. Then the negative eigenvalues of the operator
$\mathcal H$ defined in \eqref{eq:1}, satisfy the inequality
\begin{equation}\label{channels}
\frac{3}{4} \sum_{j=1}^N \lambda_{j1}\, \sigma_j +   \frac12\, \sum_{j=1}^N \lambda_{j1}^{3/2}  + \sum_{j=1}^N\sum_{n=2}^\infty \lambda_{jn}^{3/2} \leq \frac{3}{16}\int_0^\infty {\rm Tr}\, V^2(x) \, dx+\frac{1}{4} \, {\rm Tr}\, {\mathfrak S^3},
\end{equation}
where $-\lambda_{jn}$ are negative eigenvalues of operators $h_j $ defined by
$$
h_j \psi(x) = \frac{d^2 }{dx^2} \psi(x) - v_j(x) \psi(x), \qquad \psi'(0) - \sigma_j \psi(0) = 0.
$$
\end{theorem}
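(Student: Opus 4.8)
The plan is to reduce Theorem~\ref{split} to the already-established scalar ($N=1$) case of Theorem~\ref{main} by decoupling the system into its $N$ channels. Since $V(x)=\mathrm{diag}(v_1(x),\dots,v_N(x))$ and $\mathfrak S=\mathrm{diag}(\sigma_1,\dots,\sigma_N)$ are both diagonal, the differential expression $-\frac{d^2}{dx^2}\otimes\I-V$ and the boundary condition $\varphi'(0)-\mathfrak S\,\varphi(0)=0$ each respect the orthogonal decomposition $L^2(\R_+,\C^N)=\bigoplus_{j=1}^N L^2(\R_+)$. Hence $\mathcal H$ is unitarily equivalent to the orthogonal direct sum $\bigoplus_{j=1}^N h_j$, where $h_j$ is the $j$-th diagonal component acting in $L^2(\R_+)$ with the scalar Robin condition $\psi'(0)-\sigma_j\psi(0)=0$. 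First I would record this decomposition carefully, checking that the form domains split accordingly so that the direct sum is exact.

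It follows that the negative spectrum of $\mathcal H$, counted with multiplicity, is the union over $j$ of the negative spectra of the $h_j$; thus the eigenvalues $\{-\lambda_{jn}\}_{n}$ of $h_j$ together exhaust the negative eigenvalues of $\mathcal H$. Each $h_j$ satisfies the hypotheses of Theorem~\ref{main} in the scalar case: from $V\ge0$ we get $v_j\ge0$, and from $\mathrm{Tr}\,V^2=\sum_j v_j^2\in L^1(0,\infty)$ we get $v_j^2\in L^1(0,\infty)$ for each $j$.

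Next I would apply Theorem~\ref{main} with $N=1$ to each $h_j$. On the half-line a scalar Sturm--Liouville problem with a self-adjoint boundary condition at the origin and the $L^2$ requirement at infinity has only simple eigenvalues, so the ground-state multiplicity is $\varkappa_{j1}=1$ and the coefficient $\tfrac12(2\varkappa_1-N)$ equals $\tfrac12(2-1)=\tfrac12$. Using $\mathrm{Tr}\,\mathfrak S=\sigma_j$, $\mathrm{Tr}\,\mathfrak S^3=\sigma_j^3$ and $\mathrm{Tr}\,V^2=v_j^2$ in the scalar setting, the inequality for $h_j$ becomes
\begin{equation*}
\frac34\,\lambda_{j1}\,\sigma_j+\frac12\,\lambda_{j1}^{3/2}+\sum_{n=2}^\infty\lambda_{jn}^{3/2}\le\frac{3}{16}\int_0^\infty v_j^2(x)\,dx+\frac14\,\sigma_j^3.
\end{equation*}

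Finally I would sum these $N$ inequalities over $j=1,\dots,N$. The summed left-hand side is precisely the left-hand side of \eqref{channels}, while on the right, because $V$ and $\mathfrak S$ are diagonal, $\sum_{j=1}^N\int_0^\infty v_j^2=\int_0^\infty\mathrm{Tr}\,V^2$ and $\sum_{j=1}^N\sigma_j^3=\mathrm{Tr}\,\mathfrak S^3$, which reproduces the right-hand side of \eqref{channels}. The argument is short once Theorem~\ref{main} is in hand; the only point requiring care is the bookkeeping of multiplicities. Indeed, the refinement of \eqref{channels} over a direct application of Theorem~\ref{main} to the full matrix operator stems exactly from applying the scalar inequality channel by channel, so that \emph{every} channel's ground state carries the weight $\tfrac12\lambda_{j1}^{3/2}$, rather than only the globally lowest eigenvalue.
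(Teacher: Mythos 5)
Your proposal is correct and follows essentially the same route as the paper, which states Theorem~\ref{split} as an immediate consequence of the fact that for diagonal $V$ and $\mathfrak S$ the operator $\mathcal H$ decouples into the scalar operators $h_j$, so that one applies the $N=1$ case of Theorem~\ref{main} (with $\varkappa_1=1$, hence coefficient $\tfrac12$) to each channel and sums over $j$. Your bookkeeping of the traces and of the ground-state multiplicities is exactly the intended argument, including the observation that this channel-wise application is what makes \eqref{channels} sharper than \eqref{LTh_main}.
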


\medskip
\noindent
{\bf Remark.} {\it Note that the inequality  \eqref{channels} is much more precise than \eqref{LTh_main} due to the diagonal structure of the operator $\mathcal H$. In \eqref{channels} all $N$ first eigenvalues generated by each channel are affected by the Robin boundary conditions, whereas in \eqref{LTh_main} only the first one, see Example 2b).}

\medskip
\noindent
Finally we give an example how our results could be applied for spectral estimates of multi-dimensional Schr\"odinger operators.

\noindent
Let $\Bbb R^d_+ = \{x=(x_1,x'): x_1>0, x'\in\Bbb R^{d-1}\}$ and let $H$ be a Schr\"odinger operator in $L^2(\Bbb R^d_+)$ with the Neumann boundary conditions
\begin{equation}\label{multiD}
H \psi = -\Delta \psi - V \psi = -\lambda \psi,   \qquad \frac{\partial}{\partial x_1} \psi(0,x') = 0.
\end{equation}
The following result could be obtained by a ``lifting" argument with respect to dimension, see \cite{L}, \cite{LT}:

\begin{theorem}\label{multi_dim}
Let $V\ge0$ and $V\in L^{\gamma+d/2}$, $\gamma \ge 3/2$. Then for the negative eigenvalues $\{-\lambda_n\}$ of the operator  \eqref{multiD} we have
\begin{multline}\label{LTh_d}
\sum_n \lambda_n^\gamma \le L_{\gamma,d}^{cl}\, \int_{\Bbb R^d_+} V^{\gamma+d/2}(x) \, dx +
\frac12\, L_{\gamma,d-1}^{cl} \, \int_{\Bbb R^{d-1}} \mu^{\gamma + (d-1)/2}_1(x')\, dx'\\
\le
 2 \, L_{\gamma,d}^{cl}\, \int_{\Bbb R^d_+} V^{\gamma+d/2} \, dx.
\end{multline}
 Here $\mu_1(x')$ is the ground state energy for the operator $ -d^2/dx_1^2 - V(x_1,x')$ in $L^2(\Bbb R_+)$ with the Neumann boundary condition at zero.
\end{theorem}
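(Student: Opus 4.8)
The plan is to combine the sharp operator-valued Lieb--Thirring inequality of Laptev--Weidl \cite{LW1,LW2} in the $d-1$ transverse variables with the scalar one-dimensional Neumann bound \eqref{N-scalar} in the $x_1$-variable. Writing $x=(x_1,x')$ with $x_1\in\Bbb R_+$ and $x'\in\Bbb R^{d-1}$, I would identify $L^2(\Bbb R^d_+)$ with $L^2(\Bbb R^{d-1};\mathcal G)$, where $\mathcal G=L^2(\Bbb R_+)$ is the fibre in the $x_1$-direction. Under this identification the operator in \eqref{multiD} becomes $H=-\Delta_{x'}\otimes\Bbb I+\A(x')$, where for a.e.\ fixed $x'$ the fibre operator
$$
\A(x')=-\frac{d^2}{dx_1^2}-V(\cdot,x')
$$
acts in $\mathcal G$ with the Neumann condition at $x_1=0$; its negative eigenvalues are $-\mu_k(x')$, with $\mu_1(x')$ the ground-state energy appearing in the statement.

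Splitting $\A(x')=\A(x')_+-\A(x')_-$ and using $\A(x')\ge-\A(x')_-$, the variational principle gives $H\ge-\Delta_{x'}\otimes\Bbb I-\A(x')_-$, hence ${\rm Tr}\,(H)_-^\gamma\le{\rm Tr}\,(-\Delta_{x'}\otimes\Bbb I-\A(x')_-)_-^\gamma$. To the right-hand side I would apply the operator-valued Lieb--Thirring inequality with the classical constant, valid for $\gamma\ge3/2$:
$$
\sum_n\lambda_n^\gamma={\rm Tr}\,(H)_-^\gamma\le L_{\gamma,d-1}^{cl}\int_{\Bbb R^{d-1}}{\rm Tr}_{\mathcal G}\,\big(\A(x')_-\big)^{\gamma+(d-1)/2}\,dx'.
$$
Since ${\rm Tr}_{\mathcal G}(\A(x')_-)^s=\sum_k\mu_k(x')^s$ with $s:=\gamma+(d-1)/2\ge3/2$, the scalar bound \eqref{N-scalar} applied in each fibre yields
$$
{\rm Tr}_{\mathcal G}\big(\A(x')_-\big)^s=\tfrac12\mu_1(x')^s+\Big(\tfrac12\mu_1(x')^s+\sum_{k\ge2}\mu_k(x')^s\Big)\le\tfrac12\mu_1(x')^s+L_{s,1}^{cl}\int_0^\infty V(x_1,x')^{s+1/2}\,dx_1.
$$

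Integrating this over $x'$ and inserting it into the operator-valued bound, the first inequality in \eqref{LTh_d} follows once one uses the multiplicativity of the semiclassical constants, $L_{\gamma,d-1}^{cl}\,L_{\gamma+(d-1)/2,1}^{cl}=L_{\gamma,d}^{cl}$, together with $s+\tfrac12=\gamma+\tfrac d2$; this identity is immediate from $L_{\gamma,d}^{cl}=\Gamma(\gamma+1)\,(4\pi)^{-d/2}/\Gamma(\gamma+\tfrac d2+1)$. For the second inequality I would discard the nonnegative tail in \eqref{N-scalar} to get the pointwise estimate $\mu_1(x')^s\le2L_{s,1}^{cl}\int_0^\infty V(x_1,x')^{s+1/2}\,dx_1$; multiplying by $\tfrac12L_{\gamma,d-1}^{cl}$, integrating, and using the same constant identity shows that the $\mu_1$-term is itself dominated by $L_{\gamma,d}^{cl}\int_{\Bbb R^d_+}V^{\gamma+d/2}\,dx$, which merely doubles the leading term.

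The main obstacle is the clean application of the sharp operator-valued Lieb--Thirring inequality in this fibred setting: one must ensure that $x'\mapsto\A(x')$ is a weakly measurable family of self-adjoint operators whose negative parts have a.e.\ finite, $(\gamma+(d-1)/2)$-summable spectrum and satisfy $\int_{\Bbb R^{d-1}}{\rm Tr}_{\mathcal G}(\A(x')_-)^{\gamma+(d-1)/2}dx'<\infty$---all guaranteed by $V\in L^{\gamma+d/2}$ together with the scalar bound \eqref{N-scalar}---and that the transverse problem genuinely involves the free Laplacian $-\Delta_{x'}$ on all of $\Bbb R^{d-1}$, so that no boundary term appears in the transverse directions, the Neumann condition being entirely encoded in the domain of the fibre operator $\A(x')$. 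Granting this, the remaining steps reduce to the routine Beta/Gamma bookkeeping for the constants.
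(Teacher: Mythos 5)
Your proposal is correct and follows essentially the same route as the paper: the paper also applies the sharp operator-valued Lieb--Thirring inequality of Laptev--Weidl over $\Bbb R^{d-1}$ to the fibred operator whose fibres are the one-dimensional Neumann Schr\"odinger operators $-d^2/dx_1^2-V(\cdot,x')$, then bounds $\sum_j\mu_j^{\gamma+(d-1)/2}(x')$ in each fibre by \eqref{N-scalar} (keeping the $\tfrac12\mu_1^{\gamma+(d-1)/2}$ term for the first inequality and discarding it via the same bound for the second), and concludes with the identity $L_{\gamma,d-1}^{cl}\,L_{\gamma+(d-1)/2,1}^{cl}=L_{\gamma,d}^{cl}$. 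The only difference is that you spell out the direct-integral identification, the variational reduction to $-\A(x')_-$, and the measurability bookkeeping, which the paper leaves implicit.
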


\medskip
\noindent
{\bf Remark.}
{\it A similar inequality could be obtained by extending the operator \eqref{multiD} to the whole space
$L^2(\Bbb R^d)$ with the symmetrically reflected potential. However, applying then the known Lieb-Thirring inequalities, we would have the constant $\,2^{\gamma+ d/2}$ instead of $\,2$ in \eqref{LTh_d}}.


\section{Some auxiliary results}

\noindent
In this Section we assume that the matrix-function $V$ is compactly supported, ${\rm supp}\, V\subset [a,b]$ for some $a,b: 0<a<b<\infty$ and adapt the arguments from \cite{BL} to the case of semiaxis.

\noindent
We begin with stating a well-known fact concerning the ground state of the operator \eqref{eq:1}.
\begin{lemma}\label{GrState}
Let $-\lambda<0$ be the ground state energy of the operator $\mathcal H$ and let
$\varphi(x) = \{\varphi_k\}_{k=0}^N$ be a $L^2(\Bbb R_+, \Bbb C^N)$-vector-function satisfying the equation
\begin{equation}\label{Schr2}
\mathcal H \varphi(x) = -\frac{d^2}{dx^2}\varphi(x) - V(x) \varphi(x) = -\lambda\, \varphi(x), \qquad
\varphi'(0) - \mathfrak S \varphi(0) = 0,
\end{equation}
and such that the $2N$ vector $(\varphi(0), \varphi'(0))$ is not trivial.
Then $\varphi(x) \not=0$, $x\in \Bbb R_+$, and the ground state energy multiplicity  is at most $N$.
\end{lemma}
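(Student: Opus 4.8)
The plan is to argue through the quadratic form associated with $\mathcal H$. Writing
\[
Q[\psi] = \int_0^\infty \big(|\psi'(x)|^2 - \langle V(x)\psi(x),\psi(x)\rangle\big)\,dx + \langle \mathfrak S\,\psi(0),\psi(0)\rangle
\]
on the form domain $H^1(\Bbb R_+,\Bbb C^N)$, integration by parts shows $Q[\psi]=\langle\mathcal H\psi,\psi\rangle$ for $\psi$ in the operator domain, the boundary term encoding the Robin condition $\psi'(0)=\mathfrak S\psi(0)$. Since $V$ is compactly supported, the essential spectrum of $\mathcal H$ is $[0,\infty)$, so the ground state energy $-\lambda<0$ is a genuine discrete eigenvalue, characterised by $-\lambda=\inf_{\psi\neq0}Q[\psi]/\|\psi\|^2$ with equality attained precisely on the eigenspace for $-\lambda$. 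I would use this minimisation property to exclude interior zeros of $\varphi$, and a separate elementary ODE argument for the multiplicity bound.

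For the non-vanishing, suppose $\varphi(x_0)=0$ for some $x_0\in\Bbb R_+$. Since $\varphi$ solves the linear system $\varphi''=(\lambda\,\Bbb I - V)\varphi$, uniqueness for the initial value problem forces $\varphi'(x_0)\neq0$, as otherwise $\varphi\equiv0$, contradicting that $(\varphi(0),\varphi'(0))$ is nontrivial. I then decompose $\varphi=\psi_1+\psi_2$, where $\psi_1$ equals $\varphi$ on $[0,x_0]$ and $0$ on $(x_0,\infty)$ and $\psi_2=\varphi-\psi_1$; both lie in $H^1(\Bbb R_+,\Bbb C^N)$ because they are continuous at $x_0$, where $\varphi$ vanishes. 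A direct computation gives $Q[\psi_1]+Q[\psi_2]=Q[\varphi]$ and $\|\psi_1\|^2+\|\psi_2\|^2=\|\varphi\|^2$, the boundary term contributing only to $\psi_1$. Combining the variational lower bounds $Q[\psi_i]\ge -\lambda\|\psi_i\|^2$ with the identity $Q[\varphi]=-\lambda\|\varphi\|^2$, both inequalities must be equalities, so each $\psi_i$ is itself a minimiser, hence an eigenfunction for $-\lambda$ lying in the operator domain. In one dimension such an eigenfunction is $C^1$; but $\psi_1'$ jumps from the nonzero value $\varphi'(x_0)$ to $0$ across $x_0$, a contradiction. Hence $\varphi(x)\neq0$ for every $x>0$.

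For the multiplicity I would use that $V$ is supported in $[a,b]$. On $(b,\infty)$ the equation \eqref{Schr2} reduces to $\varphi''=\lambda\varphi$, whose solutions are $e^{-\sqrt\lambda\,x}c_- + e^{\sqrt\lambda\,x}c_+$ with $c_\pm\in\Bbb C^N$, and membership in $L^2$ forces $c_+=0$. Thus the solutions of \eqref{Schr2} that are square-integrable near $+\infty$ form an $N$-dimensional space, and the eigenspace for $-\lambda$, being the subspace of those that in addition meet the boundary condition at $0$, has dimension at most $N$. The delicate point is the regularity step in the non-vanishing part: one must justify that a minimiser of $Q$ is a genuine $C^1$ eigenfunction and that the cut-off pieces $\psi_i$ stay in the form domain, so that the derivative discontinuity at $x_0$ produces the contradiction. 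I note that this argument uses only that $V$ is Hermitian, not any sign condition.
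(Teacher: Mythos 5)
Your proof is correct, and its core step — truncating $\varphi$ at a hypothetical zero $x_0$ and using the variational characterisation of the ground state to conclude that the truncated function is itself a ground-state eigenfunction — is exactly the paper's argument. The differences are in the finishing moves. You force the contradiction from regularity: the eigenfunction $\psi_1$ must be $C^1$, but its derivative jumps from $\varphi'(x_0)\neq 0$ to $0$ at $x_0$ (which is why you must first rule out $\varphi'(x_0)=0$ via uniqueness). The paper instead observes that the truncated eigenfunction vanishes identically on $[x_0,\infty)$, hence by backward uniqueness for the Cauchy problem it vanishes everywhere, contradicting nontriviality; that route needs no information about $\varphi'(x_0)$. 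Likewise, your two-piece decomposition $\varphi=\psi_1+\psi_2$ with additivity of the form and equality-forcing is a slightly more elaborate path to the same conclusion than the paper's direct computation of the Rayleigh quotient of the truncation alone; both are sound (just note explicitly that $\psi_1\not\equiv 0$, which follows from the nontriviality of $(\varphi(0),\varphi'(0))$). Two points where your write-up improves on the paper: you carry the Robin term $\langle\mathfrak S\psi(0),\psi(0)\rangle$ explicitly in the quadratic form, which the paper's displayed integration by parts silently drops, although it is needed for the identity $Q[\tilde\varphi]=-\lambda\|\tilde\varphi\|^2$ to hold as stated; and you actually prove the multiplicity bound, which the paper asserts but never argues. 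For that bound there is an even shorter argument that does not use compact support of $V$: the map $\varphi\mapsto\varphi(0)$ is injective on the eigenspace, since $\varphi(0)=0$ forces $\varphi'(0)=\mathfrak S\varphi(0)=0$ and hence $\varphi\equiv 0$ by uniqueness, so the eigenspace embeds into $\mathbb{C}^N$.
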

\begin{proof} Suppose that $\varphi(x_0) = 0$ for some $x_0>0$. Consider the continuous function
\begin{equation*}
\tilde\varphi(x) =
\begin{cases}
\varphi(x), & x<x_0\\
0, & x\ge x_0.
\end{cases}
\end{equation*}
This function is non-trivial, belongs to the Sobolev space $H^1(\Bbb R_+,\Bbb C^N)$ and satisfies
\begin{multline*}
\int_{\Bbb R_+} \big(|\tilde\varphi' |^2 - (V\tilde\varphi,\tilde\varphi)_{\Bbb C^N} \big)\, dx =
\int_0^{x_0} \big(|\varphi' |^2 - (V\varphi,\varphi)_{\Bbb C^N} \big)\, dx\\
= \int_0^{x_0} (-\varphi'' - V\varphi, \varphi)_{\Bbb C^N}\, dx = -\lambda\,
\int_0^{x_0} |\varphi|^2\, dx = -\lambda \, \int_{\Bbb R_+} |\tilde\varphi|^2\, dx.
\end{multline*}
Therefore $\tilde\varphi$ minimizes the closed quadratic form associated with $\mathcal H$. Thus by the variational principle $\tilde\varphi$ belongs to the domain of $\mathcal H$ and solves
the Cauchy problem pointwise. However, since $\tilde\varphi(x) = 0 $ for $x\ge x_0$ it also
solves  the backward Cauchy problem with zero initial data at $x_0$ and by uniqueness must vanish
everywhere. This contradicts the non-triviality of $\tilde\varphi$ for $x<x_0$.
\end{proof}

\medskip
\noindent
Similarly to \cite{BL} let us introduce a (not necessary $L^2$) fundamental $N\times N$-matrix-solution $M(x)$ of the equation \eqref{Schr2}, where
$-\lambda$ is the ground state energy for the operator $\mathcal H$, so $M$ satisfies the equation
\begin{equation}\label{M_eq}
-\frac{d^2}{dx^2}M(x) - V(x) M(x) = -\lambda\, M(x), \qquad
M'(0) - \mathfrak S M(0) = 0.
\end{equation}
Denoting $M(0)=A$ and $M'(0) = B$, $B - \mathfrak S  \, A= 0$, we shall always assume that the matrix $A$ is invertible.

\noindent
By using Lemma \ref{GrState} we obtain that the matrix-function $M(x)$ is invertible for any $x\in\Bbb R_+$ and thus we can consider
\begin{equation}\label{F}
F(x) =  M'(x) \, M^{-1}(x).
\end{equation}
%

\begin{lemma}\label{F_lemma}
The matrix function $F(x)$ satisfies the following properties:
\begin{itemize}
\item  $F(x)$ is Hermitian for any $x\in\Bbb R_+$.
\item $F(x)$ is independent of the choice of the matrices $A,B$, satisfying the equation $B-\mathfrak S A= 0$ and
$$
F(0) = B\,  A^{-1}  = \mathfrak S.
$$
\item $F$  satisfies the matrix Riccati equation
\begin{equation}\label{Ricc}
F'(x) + F^2(x) + V(x) = \lambda \mathbb{I}.
\end{equation}
\end{itemize}
\end{lemma}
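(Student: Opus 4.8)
\medskip
\noindent
The plan is to treat the three assertions essentially independently, since each rests on a different elementary mechanism: the Riccati identity on direct differentiation, the Hermitian symmetry on a matrix Wronskian that is conserved along the flow, and the normalization together with the independence on uniqueness for the Cauchy problem. Throughout I would use only that $M$ solves \eqref{M_eq} with $M(x)$ invertible on $\R_+$ (already guaranteed via Lemma~\ref{GrState}), that $V$ is Hermitian and $\lambda$ real, and that $\mathfrak S$ is Hermitian.

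\medskip
\noindent
First I would establish the Riccati equation, which is the most mechanical step. Differentiating \eqref{F} and using $(M^{-1})' = -M^{-1} M' M^{-1}$ gives $F' = M'' M^{-1} - (M'M^{-1})^2 = M'' M^{-1} - F^2$. From \eqref{M_eq} one has $M'' = (\lambda \mathbb{I} - V) M$, hence $M'' M^{-1} = \lambda \mathbb{I} - V$, and substituting yields $F' + F^2 + V = \lambda \mathbb{I}$, which is \eqref{Ricc}.

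\medskip
\noindent
The heart of the argument, and the step I expect to be the main obstacle, is the Hermitian symmetry. Here I would introduce the matrix Wronskian $W(x) = M^*(x) M'(x) - (M'(x))^* M(x)$ and compute its derivative. After cancellation of the $(M')^* M'$ terms one is left with $W' = M^* M'' - (M'')^* M$, and inserting $M'' = (\lambda \mathbb{I} - V)M$ together with $V^* = V$, $\bar\lambda = \lambda$ shows that both summands equal $\lambda M^* M - M^* V M$, so $W' \equiv 0$. Evaluating at the origin and using the boundary relation $B = \mathfrak S A$ gives $W(0) = A^* B - B^* A = A^*(\mathfrak S - \mathfrak S^*) A = 0$, precisely because $\mathfrak S$ is Hermitian; hence $W \equiv 0$, i.e. $M^* M' = (M')^* M$ on all of $\R_+$. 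Multiplying this identity by $(M^*)^{-1}$ on the left and $M^{-1}$ on the right converts it into $F = F^*$, so $F$ is Hermitian. The delicate point is recognizing that the correct conserved quantity is this Wronskian and that it is the Hermiticity of $\mathfrak S$ (not merely the boundary condition itself) that forces it to vanish at $x=0$.

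\medskip
\noindent
Finally, for the normalization and independence I would argue by uniqueness. The value $F(0) = M'(0) M^{-1}(0) = B A^{-1} = \mathfrak S A A^{-1} = \mathfrak S$ is immediate from $B = \mathfrak S A$. For independence, given a second admissible pair $\tilde A, \tilde B$ with $\tilde B = \mathfrak S \tilde A$ and corresponding solution $\tilde M$, I would observe that $M(x)\, A^{-1}\tilde A$ and $\tilde M(x)$ solve \eqref{M_eq} with identical Cauchy data at $x=0$, so by uniqueness $\tilde M = M C$ with the constant invertible matrix $C = A^{-1}\tilde A$. Then $\tilde F = \tilde M'\tilde M^{-1} = M' C C^{-1} M^{-1} = M'M^{-1} = F$, showing that $F$ does not depend on the choice of $A,B$.
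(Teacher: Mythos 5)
Your proof is correct and follows essentially the same route as the paper: the Riccati equation by direct differentiation of $F = M'M^{-1}$, and the Hermitian symmetry via the conserved matrix Wronskian $M^*M' - (M')^*M$, which vanishes at $x=0$ precisely because $\mathfrak S$ is Hermitian. You in fact go slightly further than the paper, whose proof leaves the independence of $F$ from the choice of $(A,B)$ implicit; your uniqueness argument ($\tilde M = MC$ with constant $C = A^{-1}\tilde A$, hence $\tilde F = F$) supplies that detail explicitly.
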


\bigskip
\noindent
\begin{proof}
From the Wronskian identity
$$
\frac{d}{dx} W(x) := \frac{d}{dx} \Big( M^*(x)\, M'(x) \,  - (M^*(x))'\, M(x)\Big) = 0
$$
we obtain
$$
W(x) = M^*(x) \, M'(x)  - (M^*(x))'\, M(x) = {\rm const}.
$$
Since $M(0) = A$ and $M'(0) = B$, using the fact that $\mathfrak S$ is Hermitian we find
\begin{multline*}
W(0) = M^*(0) \, M'(0)  - (M^*(0))'\, M(0) \\
=  A^*\Big( B\, A^{-1} - (A^*)^{-1} B^*\Big)\, A = A^*(\mathfrak S-\mathfrak S^*) A = 0.
\end{multline*}
Thus
$$
W(x) = M^*(x) \, M'(x)  - (M^*(x))'\, M(x) =0.
$$
Multiplying the latter identity by $M^{-1}$ from the right and by  $(M^{-1})^*$ from the left we obtain
$F(x) = F^*(x)$.
Moreover
\begin{multline*}
F'+ F^2 = (M'\, M^{-1})' + (M'\, M^{-1})^2 \\
= M'' M^{-1} - M' \, M^{-1} M' M^{-1} + M' \, M^{-1} M' M^{-1} =
(\lambda - V) M\, M^{-1}  = \lambda \, \mathbb{I} - V.
\end{multline*}
\end{proof}

\noindent
Next, we analyze the behavior of the matrices $F(x)$ and their eigenvalues and eigenvectors
as $x\to\infty$ . For $x > b$ any solution of the differential equation \eqref{M_eq} can be
written as
\begin{multline}\label{M_sol}
M(x)=\cosh (\sqrt \lambda(x-b))M(b)+ \frac{1}{\sqrt \lambda}\sinh (\sqrt \lambda (x-b))M'(b)\\
=\left(\cosh(\sqrt{\lambda}(x-b))\mathbb{I}+\frac{1}{\sqrt{\lambda}}\sinh(\sqrt{\lambda}(x-b))F(b)\right )M(b).
\end{multline}
With the help of this representation we show

\bigskip

\begin{lemma}\label{F_spect}
For all $x\ge b$ it holds $F(x) = f(x,F(b))$, where
\begin{equation}\label{f}
f(x,\mu)=\sqrt\lambda \, \frac{\sqrt \lambda \tanh (\sqrt \lambda (x-b))
+\mu}{\sqrt \lambda +\mu \tanh(\sqrt \lambda(x-b))}.
\end{equation}
\end{lemma}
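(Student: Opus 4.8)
The plan is to derive \eqref{f} by a direct computation starting from the explicit representation \eqref{M_sol}, which is available on $[b,\infty)$ precisely because $V$ vanishes there and \eqref{M_eq} reduces to $M''=\lambda M$. Writing $t=x-b$ and abbreviating $c=\cosh(\sqrt\lambda\,t)$, $s=\sinh(\sqrt\lambda\,t)$, the second line of \eqref{M_sol} reads
\[
M(x)=\Big(c\,\mathbb{I}+\tfrac{s}{\sqrt\lambda}\,F(b)\Big)M(b).
\]
Differentiating in $x$ and using $c'=\sqrt\lambda\,s$ and $(s/\sqrt\lambda)'=c$, I obtain
\[
M'(x)=\Big(\sqrt\lambda\,s\,\mathbb{I}+c\,F(b)\Big)M(b).
\]
Forming $F(x)=M'(x)M^{-1}(x)$, the factor $M(b)M^{-1}(b)=\mathbb{I}$ cancels and leaves
\[
F(x)=\Big(\sqrt\lambda\,s\,\mathbb{I}+c\,F(b)\Big)\Big(c\,\mathbb{I}+\tfrac{s}{\sqrt\lambda}\,F(b)\Big)^{-1}
=\sqrt\lambda\,\big(\sqrt\lambda\,s\,\mathbb{I}+c\,F(b)\big)\big(\sqrt\lambda\,c\,\mathbb{I}+s\,F(b)\big)^{-1}.
\]
Dividing numerator and denominator by $c$ and using $\tanh(\sqrt\lambda\,t)=s/c$ then reproduces exactly the right-hand side of \eqref{f} with $\mu=F(b)$.

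The one point that needs care is the passage from this matrix identity to the compact ``scalar-looking'' formula \eqref{f}. The essential observation is that every matrix occurring in the numerator and the denominator is an affine function of the single matrix $F(b)$, so they all commute with one another; consequently the quotient is unambiguous and $f(x,\mu)$ may legitimately be read as an ordinary Möbius-type rational expression in $\mu=F(b)$. The cleanest way to make this rigorous, and the one I would adopt, is to diagonalize. Since $F(b)$ is Hermitian by Lemma \ref{F_lemma}, write $F(b)=U\,\mathrm{diag}(\mu_1,\dots,\mu_N)\,U^*$ with real $\mu_j$; every factor above is then simultaneously diagonalized by $U$, so on each eigenline the identity collapses to the scalar transformation $\mu_j\mapsto \sqrt\lambda\,(\sqrt\lambda\,s+\mu_j c)/(\sqrt\lambda\,c+\mu_j s)$. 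This also makes transparent the spectral content of the lemma: $F(x)$ shares the eigenvectors of $F(b)$, and its eigenvalues evolve under $f(x,\cdot)$.

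The main obstacle is to guarantee that the denominator $\sqrt\lambda\,c\,\mathbb{I}+s\,F(b)$ is invertible for all $x\ge b$, equivalently that no scalar denominator $\sqrt\lambda\,c+\mu_j s$ vanishes. I would settle this not by an ad hoc sign analysis of the $\mu_j$, but by invoking what is already established: by Lemma \ref{GrState} the solution matrix $M(x)$ is invertible for every $x\in\Bbb R_+$, and the factorization above exhibits $M(x)=\big(c\,\mathbb{I}+\tfrac{s}{\sqrt\lambda}F(b)\big)M(b)$ with $M(b)$ invertible; hence the left factor, whose $j$-th eigenvalue is $\tfrac{1}{\sqrt\lambda}(\sqrt\lambda\,c+\mu_j s)$, must be invertible as well. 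Thus every denominator is nonzero on $[b,\infty)$, the rational expression $f(x,F(b))$ is well defined, and \eqref{f} holds throughout, which completes the argument.
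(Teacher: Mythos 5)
Your proof is correct and follows essentially the same route as the paper: differentiate the representation \eqref{M_sol}, form $F(x)=M'(x)M^{-1}(x)$ so that the $M(b)$ factors cancel, and read off \eqref{f} by applying the spectral theorem to the Hermitian matrix $F(b)$. Your explicit verification that the denominator $\sqrt\lambda\cosh(\sqrt\lambda(x-b))\,\mathbb{I}+\sinh(\sqrt\lambda(x-b))F(b)$ is invertible (deduced from the invertibility of $M(x)$ guaranteed by Lemma \ref{GrState}) is a point the paper leaves implicit, but it is a refinement of the same argument rather than a different approach.
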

\begin{proof} In view of \eqref{M_sol} we have
$$
M'(x) = \left(\sqrt\lambda \sinh(\sqrt{\lambda}(x-b))\mathbb{I}+\cosh(\sqrt{\lambda}(x-b))F(b)\right )M(b),
$$
$$
(M(x))^{-1} = (M(b))^{-1} \left(\cosh(\sqrt{\lambda}(x-b))\mathbb{I}+\frac{1}{\sqrt{\lambda}}\sinh(\sqrt{\lambda}(x-b))F(b)\right )^{-1}.
$$
It remains to insert these expressions in the definition $F(x) = M'(x)(M(x))^{-1}$ and to
apply the spectral theorem for the Hermitian matrix $F(b)$.
\end{proof}

\medskip
\noindent
Note that $f(x, \mu)$ is strictly monotone in $\mu$. As a direct consequence of Lemma \ref{F_spect} we conclude, that the eigenvectors of the matrix $F(x)$ are independent of $x$ for $x\ge b$ as vectors in 
$\Bbb C^N$.  Moreover, the eigenvalues of $F$  may or may not depend on $x$ outside the support of $V$ depending on if they correspond to growing or decaying solutions.

\begin{corollary}
Each eigenvalue $\mu_k$ of $F(b)$ gives rise to a continuous eigenvalue branch $\mu_k(x) = f(x,\mu_k(b))$. In particular, we have
$$
\mu_k(x) = -\sqrt\lambda \quad {\it iff} \quad \mu_k(b) = -\sqrt\lambda,
$$
and
$$
\lim_{x\to\infty} \mu_k(x) = \sqrt\lambda    \quad {\it iff} \quad \mu_k(b) \not= -\sqrt\lambda.
$$
The limit in the last expression is achieved exponentially fast.
\end{corollary}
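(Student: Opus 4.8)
The plan is to reduce everything to the scalar Möbius map $f(x,\cdot)$ and then read off all three assertions from two elementary algebraic identities. By Lemma \ref{F_spect} we have $F(x)=f(x,F(b))$ for $x\ge b$, understood through the spectral calculus of the Hermitian matrix $F(b)$; since the discussion preceding the corollary shows that the eigenprojections of $F(x)$ are frozen for $x\ge b$, each eigenvalue $\mu_k(b)$ propagates along the branch $\mu_k(x)=f(x,\mu_k(b))$, which is continuous (indeed smooth) in $x$ because $f$ is. As $F(x)$ is a genuine Hermitian matrix for every $x\in\R_+$ by Lemmas \ref{GrState}--\ref{F_lemma}, its eigenvalues stay finite and real, so the pole of the Möbius map is never met along these branches and it suffices to analyse the real-valued function $x\mapsto f(x,\mu)$ for fixed real $\mu$.

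First I would record that $-\sqrt\lambda$ is a fixed point of $f(x,\cdot)$. Writing $t=\tanh(\sqrt\lambda(x-b))$ and substituting $\mu=-\sqrt\lambda$ into \eqref{f}, the numerator and denominator reduce to $\pm\sqrt\lambda\,(t-1)$, whence $f(x,-\sqrt\lambda)=-\sqrt\lambda$ for all $x\ge b$. For the converse I would solve $f(x,\mu)=-\sqrt\lambda$ directly: clearing denominators gives $(t+1)(\sqrt\lambda+\mu)=0$, and since $t\in[0,1)$ forces $t+1>0$, this yields $\mu=-\sqrt\lambda$. Together these prove the first equivalence.

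For the limit I would use the single identity
\[
f(x,\mu)-\sqrt\lambda=\sqrt\lambda\,(\sqrt\lambda-\mu)\,\frac{t-1}{\sqrt\lambda+\mu t},\qquad t=\tanh(\sqrt\lambda(x-b)),
\]
obtained by subtracting $\sqrt\lambda$ from \eqref{f} and simplifying the numerator to $(\sqrt\lambda-\mu)(t-1)$. Since $t\to1$ as $x\to\infty$, the denominator tends to $\sqrt\lambda+\mu$; hence the right-hand side tends to $0$ exactly when $\sqrt\lambda+\mu\neq0$, i.e. for $\mu\neq-\sqrt\lambda$, while for $\mu=-\sqrt\lambda$ the branch is constantly $-\sqrt\lambda$ by the first part. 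This gives the second equivalence. The exponential rate then follows from $1-t=2e^{-2\sqrt\lambda(x-b)}/(1+e^{-2\sqrt\lambda(x-b)})\sim 2e^{-2\sqrt\lambda(x-b)}$, so that $\mu_k(x)-\sqrt\lambda$ is of order $e^{-2\sqrt\lambda(x-b)}$.

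I expect no serious obstacle: the corollary is an elementary analysis of a one-parameter family of Möbius transformations with fixed points $\pm\sqrt\lambda$, the value $-\sqrt\lambda$ being fixed for all $x$ and $+\sqrt\lambda$ being attracting as $x\to\infty$. The only points deserving a word of care are the legitimacy of treating each eigenvalue separately, which rests on the frozen eigenvectors established just before the statement, and the absence of a pole on the branches, guaranteed by the Hermiticity and invertibility of $M(x)$ from Lemmas \ref{GrState}--\ref{F_lemma}.
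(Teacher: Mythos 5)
Your proof is correct and follows the same route the paper intends: the paper states this corollary as a direct consequence of Lemma \ref{F_spect}, leaving the elementary analysis of the M\"obius map $f(x,\cdot)$ implicit, and your computation (fixed point at $-\sqrt\lambda$, the identity $f(x,\mu)-\sqrt\lambda=\sqrt\lambda(\sqrt\lambda-\mu)(t-1)/(\sqrt\lambda+\mu t)$, and the rate $1-t\sim 2e^{-2\sqrt\lambda(x-b)}$) is exactly the verification that is being suppressed. Your additional observation that the pole of the M\"obius map is never met --- because $F(x)=M'(x)M^{-1}(x)$ exists and is Hermitian for all $x$ by Lemmas \ref{GrState} and \ref{F_lemma} --- is a point the paper does not spell out, and it is handled correctly.
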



\noindent
{\bf Remark.}
{\it There is a one-to-one correspondence between the $\varkappa_1$-dimensional
space of ground states for $\mathcal H$ and a $\varkappa_1$-dimensional eigenspace of $F(b)$ corresponding
to the eigenvalue $-\sqrt\lambda$. Indeed, since $M(x)$ is a fundamental system of the solutions
of the Cauchy problem  \eqref{Schr2} and $F(b)$ is invertible, any particular solution $\varphi$ of  \eqref{Schr2} can be represented as $\varphi(x) = F(x)(F(b))^{-1}\, \nu$ with some $\nu\in \Bbb C^N$.
Hence, by \eqref{M_sol}
\begin{multline}\label{part_sol}
\varphi_\nu(x)=
\cosh(\sqrt{\lambda}(x-b))\,\nu+\frac{1}{\sqrt{\lambda}}\sinh(\sqrt{\lambda}(x-b))F(b)\,\nu
\\
=\frac{1}{2\sqrt\lambda} \, e^{\sqrt\lambda(x-b)}\Big(\sqrt\lambda \nu + F(b) \, \nu\Big) -
\frac{1}{2\sqrt\lambda} \,  e^{-\sqrt\lambda(x-b)}\Big(\sqrt\lambda \nu - F(b) \, \nu\Big) .
\end{multline}
This function becomes an $L^2$-eigenfunction of $\mathcal H$, if and only if $F(b)\, \nu = -\sqrt\lambda \,\nu$.
}

\section{Proofs of the main results}

\noindent
{\it Proof of Theorem \ref{main}} .

\noindent
Let now $-\lambda_1$ be the ground state energy of the operator $\mathcal H$ with multiplicity $\varkappa_1 \le N$, let $M_1(x)$ be a fundamental system of solutions corresponding the eigenvalue $-\lambda_1$ and $F_1 = {M_1}'\, M_1^{-1}$.
We consider the operator
$$
Q_1=\frac{d}{dx}\otimes\mathbb{I} - F_1(x)
$$
and its adjoint
$$
Q_1^*=-\frac{d}{dx}\otimes\mathbb{I}-F_1(x)
$$
in $L^2(\mathbb{R}^+,\mathbb{C}^N)$. Using Riccati's equation \eqref{Ricc} we obtain the following factorization of the original operator $\mathcal H$
$$
Q_1^*Q_1=-\frac{d^2}{dx^2}\otimes\mathbb{I}+{F_1}'(x)+(F_1(x))^2=\mathcal H+\lambda_1\mathbb{I}.
$$
Consider
$$
Q_1 Q_1^*=-\frac{d^2}{dx^2}\otimes\mathbb{I}-V(x)-2{F_1}'(x)+\lambda_1\mathbb{I}=\mathcal H-2{F_1}'(x)+\lambda_1\mathbb{I}.
$$
Note that non-zero eigenvalues of $Q_1^*Q_1$ and $Q_1 Q_1^*$ are same.
However, while the vector-eigenfunctions $\varphi$ defined in \eqref{part_sol} satisfy the boundary conditions
$$
\varphi'(0)-\mathfrak S \varphi(0)=0,
$$
the vector-eigenfunctions of $Q_1 Q_1^*$ satisfy the Dirichlet boundary condition at $0$.

\medskip
\noindent
Indeed, if $\varphi$ is a vector-eigenfunction of $Q_1^* Q_1$ satisfying $\varphi'(0)- \mathfrak S\varphi(0)=0$ then
$\psi=Q_1\varphi$ is an eigenfunction of $Q_1 Q_1^*$
and
$$
\psi(0)=(Q_1\varphi)(0)=\varphi'(0)-F_1(0)\varphi(0)=0.
$$
Next, let us verify that the kernel ${\rm ker}\, Q_1^*$ is trivial, and consequently, $0\not\in \mathrm{spec}(Q_1Q_1^*)$.
Indeed, assume for a moment that there is a non-trivial vector-function $\psi$ satisfying the Dirichlet boundary conditions at $x=0$ and such that
\begin{equation}\label{QQ*}
Q_1Q_1^* \psi = 0.
\end{equation}
Then
$$
(Q_1Q_1^* \psi, \psi) = \|Q_1^*\psi\| = 0.
$$
However, $Q_1^*\phi=0$ if and only if $\psi'(x) = F(x)\psi(x)$ for all $x\in \Bbb R_+$ and, in particular,
$\psi'(0) = F(0) \psi(0) = 0$. Since $\psi$ satisfies the equation \eqref{QQ*} together with
$\psi(0)= \psi'(0) = 0$ we obtain that $\psi\equiv 0$.

\medskip
\noindent
Hence, the negative spectra of $\mathcal H$ and $\mathcal H- 2{F_1}'$ coincide except for the spectral
value of the ground state energy, which does not belong to the spectrum of $\mathcal H- 2{F_1}'$ anymore.
We emphasize that even in the case of a $\varkappa_1$-fold degenerate ground state $-\lambda_1 = -\lambda_2=\dots = -\lambda_{\varkappa_1}$ of $\mathcal H$, this commutation method removes all these eigenvalues
$-\lambda_1, -\lambda_2, \dots ,-\lambda_{\varkappa_1}$.

\medskip
\noindent
Therefore the spectral problem for the operator \eqref{eq:1} is reduced to the operator in $L^2(\Bbb R_+)$
\begin{equation*}
\mathcal H_1 \psi= \Big(-\frac{d^2}{dx^2}\otimes I - V(x) - 2{F_1}'\Big) \, \psi = -\lambda \psi \qquad \psi(0)=0.
\end{equation*}
Let us extend $V$ by zero to the negative semi-axis. Using then the variational principle we can apply the well-known Lieb-Thirring inequalities for 1D Schr\"odinger operators with matrix-valued potentials (see \cite{LW1}, \cite{BL}) and obtain
\begin{multline*}
\sum_{n=2}^\infty \varkappa_n\,\lambda_n^{3/2} \le \frac{3}{16} \, \int_0^\infty {\rm Tr}\, (V(x) + 2 {F_1}'(x)) ^2\, dx
\\
= \frac{3}{16} \, \int_0^\infty {\rm Tr}\,\Big(V^2(x) +4 F_1'(x) \big(V(x)  + F_1' (x)\big)\Big) \, dx.
\end{multline*}
Using the Riccati equation \eqref{Ricc}, the fact that  the matrix $\lim_{x\to\infty} F(x)$
has the eigenvalue $-\sqrt{\lambda_1}$ of multiplicity $\varkappa_1$ and the eigenvalue $\sqrt{\lambda_1}$ of multiplicity $N-\varkappa_1$ and that  $F(0) = \mathfrak S$, we finally arrive at
\begin{multline*}
\sum_{n=2}^\infty \varkappa_n\,\lambda_n^{3/2} \le \frac{3}{16} \, \int_0^\infty  {\rm Tr}\,\Big(V^2(x) +
4 F_1'(x) ( \lambda_1 - F_1^2(x) \Big)\, dx\\
=  \frac{3}{16} \, \int_0^\infty  {\rm Tr}\,V^2(x)\, dx + \frac{3}{4} \, \lambda_1 {\rm Tr}\, F_1(x)\Big|_0^\infty -
\frac14\, {\rm Tr}\, F_1^3(x)\Big|_0^\infty \\
= \frac{3}{16} \, \int_0^\infty  {\rm Tr}\,V^2(x)\, dx
+ \frac{3}{4} \, \lambda_1\Big(-\varkappa_1\sqrt{\lambda_1} +
(N-\varkappa_1) \sqrt{\lambda_1} -  {\rm Tr}\, \mathfrak S \Big)\\
-
\frac14\, \Big(-\varkappa_1\lambda_1^{3/2} +  (N-\varkappa_1)\lambda_1^{3/2} - {\rm Tr}\, \mathfrak S^{3}\Big)\\
=
\frac{3}{16} \, \int_0^\infty  {\rm Tr}\,V^2(x)\, dx - \frac12\, (2\varkappa_1 -N)\lambda_1^{3/2}  - \frac34 \, \lambda_1 \, {\rm Tr}\, \mathfrak S + \frac14\, {\rm Tr}\, \mathfrak S^{3}.
\end{multline*}
Finally using standard arguments we can consider the closure of the latter inequality from the class of compactly  supported potentials to the class $L^2(\Bbb R_+, \Bbb C^N \times \Bbb C^N)$.

\noindent
The proof of Theorem \ref{main} is complete.

\medskip
\noindent
{\it Proof of Corollary \ref{AzL}}.

\noindent
Let us denote by $\lambda_n=\lambda_n(V)$ the eigenvalues of the Schr\"odinger
operator with the potential $V$.
Then by using the variational principle and the inequality \eqref{S<0} we find that for any $\gamma>3/2$
\begin{multline*}
{\mathcal B(\gamma-3/2,2)}\, \frac{3}{4} \, {\rm Tr}\, {\mathfrak S}\, \lambda_1^{\gamma-1/2}(V)  \\
+
{\mathcal B(\gamma-3/2,5/2)}\, \Big(\frac12\, (2\varkappa_1 -N) \, \lambda_1^{\gamma}(V)  + \sum_{n=2}^\infty \varkappa_n \lambda_n^{\gamma}(V)\Big) \\
=
\int_0^\infty \Big( \frac{3}{4}\, {\rm Tr}\, {\mathfrak S}\, (\lambda_1(V)-t)_+ \\
+ \frac12\, (2\varkappa_1 -N)
(\lambda_1(V)-t)_+^{3/2} + \sum_{n=2}^\infty \varkappa_n (\lambda_n(V) -t)_+^{3/2}\Big)
t^{\gamma-5/2}\, dt\\
\le
\int_0^\infty \Big( \frac{3}{4}\, {\rm Tr}\, {\mathfrak S}\, (\lambda_1((V-t)_+) \\
+ \frac12\, (2\varkappa_1 -N)
(\lambda_1((V-t)_+)^{3/2} + \sum_{n=2}^\infty \varkappa_n (\lambda_n((V -t)_+)^{3/2}\Big)
t^{\gamma-5/2}\, dt\\
\le \frac{3}{16} \, \int_0^\infty \int_0^\infty {\rm Tr}\, (V(x)-t)_+^2 \, t^{\gamma-5/2}\, dtdx\\
=
\mathcal B(\gamma-3/2,3)\,\frac{3}{16} \,  \int_0^\infty {\rm Tr}\, V^{\gamma + 1/2}(x)\, dx.
\end{multline*}
Dividing by ${\mathcal B(\gamma-3/2,5/2)}$ and noting that
$$
\frac{3}{16} \, \frac{\mathcal B(\gamma-3/2,3)}{\mathcal B(\gamma-3/2,5/2)} = L_{\gamma+1/2,1}^{cl}
$$
we complete the proof.


\bigskip
\noindent
{\it Proof of Theorem \ref{multi_dim}}.

\noindent
Let $\{\mu_j(x')\}$ be eigenvalues of the Neumann problem for the Schr\"odinger operator
$$
-\frac{d^2}{dx_1^2} \psi(x_1, x')-V(x_1,x')\, \psi(x_1,x')= -\mu(x') \psi(x_1,x')
$$
considering $x'$ as a parameter.

\noindent
For any $\gamma\ge3/2$ and $d\ge1$ let us apply the operator version of the Lieb-Thirring inequality (see \cite{LW1}) with respect to $\Bbb R^{d-1}$ and obtain
\begin{equation*}
\sum_n \lambda_n^\gamma \le L_{\gamma,d-1}^{cl} \, \int_{\Bbb R^{d-1}} \sum_j \mu_j^{\gamma+(d-1)/2}(x')\, dx'.
\end{equation*}
By using \eqref{N-scalar} we find
\begin{multline*}
\sum_j \mu_j^{\gamma+(d-1)/2}(x')\le \frac12 \,\mu_1^{\gamma+(d-1)/2}(x') + L_{\gamma + (d-1)/2,1}^{cl} \, \int_0^\infty V^{\gamma+d/2}(x_1,x') \, dx_1 \\
\le 2\, L_{\gamma + (d-1)/2,1}^{cl} \, \int_0^\infty V^{\gamma+d/2}(x_1,x') \, dx_1.
\end{multline*}
Noticing that
$$
L_{\gamma,d-1}^{cl} L_{\gamma + (d-1)/2,1}^{cl} = L_{\gamma,d}^{cl}
$$
we obtain the proof.


\subsection*{Acknowledgment}

P.E. is grateful for the hospitality extended to him in \emph{Institut Mittag-Leffler} where a part of the work was done. The research was supported in part by the \emph{Czech Science Foundation} under the project P203/11/0701. 


\end{document}